\newtheorem{theorem}{Theorem}[section]
\newtheorem{lemma}[theorem]{Lemma}
\newenvironment{proof}[1][Proof]{\begin{trivlist}
		\item[\hskip \labelsep {\bfseries #1}]}{\end{trivlist}}
\newenvironment{Even Case}[1][Even Case]{\begin{trivlist}
		\item[\hskip \labelsep {\bfseries #1}]}{\end{trivlist}}
\newenvironment{Odd Case}[1][Odd Case]{\begin{trivlist}
		\item[\hskip \labelsep {\bfseries #1}]}{\end{trivlist}}
\newcommand{\qed}{\nobreak \ifvmode \relax \else
	\ifdim\lastskip<1.5em \hskip-\lastskip
	\hskip1.5em plus0em minus0.5em \fi \nobreak
	\vrule height0.75em width0.5em depth0.25em\fi}
\definecolor{cverbbg}{gray}{0.93}
\newenvironment{lcverbatim}
 {\SaveVerbatim{cverb}}
 {\endSaveVerbatim
  \flushleft\fboxrule=0pt\fboxsep=.5em
  \colorbox{cverbbg}{%
    \makebox[\dimexpr\linewidth-2\fboxsep][l]{\BUseVerbatim{cverb}}%
  }
  \endflushleft
}
\egroup\colorbox{cverbbg}{\box\verbbox}}
\begin{document}

\preprint{APS/123-QED}

\title{Normal Frequency method for finding the existence of maximum clique of the clique complex}% Force line breaks with \\
%\thanks{A footnote to the article title}%
\author{Youngik Lee}
%\altaffiliation[Also at ]{Physics Department, Brown University.}
%Lines break automatically or can be forced with \\
\affiliation{Brown University, Department of Physics, Box 1843, 182 Hope Street, Barus \& Holley 545, Providence, RI 02912, USA}
%\date{\today}% It is always \today, today,
             %  but any date may be explicitly specified
\begin{abstract}
Determining the existence of $k$-clique in the arbitrary graph is one of the NP problems.
We suggest a novel way to determine the existence of $k$-clique in the clique complex $G$ under specific conditions, by using the normal mode and eigenvector of the Laplace matrix of $G$.
% \begin{description}
% \item[Usage]
% Secondary publications and information retrieval purposes.
% \item[Structure]
% You may use the \texttt{description} environment to structure your abstract;
% use the optional argument of the \verb+\item+ command to give the category of each item. 
% \end{description}

\end{abstract}

%\keywords{Suggested keywords}%Use showkeys class option if keyword
                              %display desired
\maketitle
\tableofcontents
\section{Introduction}
Determining the existence of $k$-clique in arbitrary graphs is one of the interesting questions. It classifies as an NP-complete problem in math.
Resonance is a physical behavior if the arbitrary system S has natural frequency $f$, when the system is subjected to an external vibration with frequency $f$, then the system S absorbs energy and starts to vibrate with gradually larger amplitudes over time.
In the system of string and nodes, which we can map as a graph, to calculate the natural frequency of the system, we can use the normal mode of the system as the natural frequency of the system.  \cite{1}
Therefore the basic idea of this research is based on 
the expectation of whether we can determine the substructure of the arbitrary graph by using the resonance effect.
For the arbitrary graph structure, if we map each node on the mass on the circle, and the edge as a spring, then we can calculate the normal mode of the system.
Then let us assume we have graphs $G_1$ and $G_2$. And then we can map those graph structures as an incidence matrix $M_1$ and $M_2$. We can calculate each normal mode frequency and normal mode, and by comparing the normal mode frequency we can determine the inclusion relationship between two arbitrary graphs.
\hfill

\section{Theory}

Let us $G$ is the arbitrary clique complex of two cliques $g_1$ and $g_2$.
Then we can follow the below algorithm.

\begin{itemize}
    \item [(1)] Change the graph $G$ to incidence matrix, $A$.
    \item [(2)] Calculate $A^2$. And if there is an element less than 1, make that (row, column) element on $A$ as zero.
    \item [(3)] Calculate $A^3$. And if there is an element less than 1, make that (row, column) element on $A$ as zero.
    \item [(4)] Repeat process (2)-(3) until we get the same input and output matrix.
    \item [(5)] Use normal mode analysis and get the maximum clique of the system.
\end{itemize}

Here step (1) is the update of the graph to matrix.
Step (2)-(4) is considered as a filtering process of cutting out the node which is unavailable to draw a closed path length of more than 3, from that node.
Step (5) is called normal mode analysis, which is the main process of this research.

\subsection{Normal mode analysis}
If we consider the graph node as a point mass with $m$, and the edge as a spring with sprint constant $k$ and put the node on the circle, then by using Eq. (\ref{eq1}) we can calculate the normal frequency of the system

\begin{equation}\label{eq1}
\det\left(\left(\frac{1}{2}\alpha_i k-mw^2\right)\mathbf{I}-\frac{1}{2}kA\right)=0
\end{equation}

Here $\alpha_i$ is the sum of $i$-th row of input matrix $A$.

For convenience let us set spring constant $k$ and mass $m$ as 1.

\begin{equation}\label{eq2}
\det\left((\alpha_i-w^2)\mathbf{I}-A\right)=0
\end{equation}

We can rewrite this equation with the Laplacian matrix $L_A$ for incidence matrix $A$,

\begin{equation}\label{eq3}
\left(\alpha_i\mathbf{I}-A\right)v_i=L_Av_i=\omega_i^2v_i=\lambda_i v_i
\end{equation}

Here $v_i$ is eigenvector, and $\lambda_i$ is eigenvalue.

\subsubsection{disconnected set of complete graph}
In the case of set of disconnected set of complete graph, we can get the available list of normal frequency $\omega_i$. If we descending order $\omega_i$, then the first value $\omega_1^2$ is the maximum clique size of $G$.

\subsubsection{clique complex with two cliques}
In case of clique complex with two cliques, we can get the available list of normal frequency $\omega_i$. If we descending order $\omega_i$, then the second value $\omega_2^2$ is the maximum clique size of $G$.

\subsection{Mathematical Proof}

\subsubsection{Disconnected set of complete graph}
According to \cite{2} we have equation,

\begin{eqnarray}\label{eq4}
\det\left(A+B\right)&=&\sum_r\sum_{\alpha,\beta}(-1)^{s(\alpha)+s(\beta)}\nonumber\\
&&\cdot\det(A[\alpha|\beta])
\det(B[\alpha|\beta])
\end{eqnarray}

Here $A$ and $B$ are $n$-square matrix, and $r$ is a integer from 0 to $n$.
And $\alpha$ and $\beta$ is increasing integer sequence of length $r$ chosen from 1 to n. $A[\alpha|\beta]$ is the $r$-square submatrix of $A$ lying in rows $\alpha$ and column $\beta$.
$B[\alpha|\beta]$ is the $(n-r)$-square submatrix of $B$ lying in comtemplementary rows $\alpha$ and comtemplementary column $\beta$. $s(\alpha)$, $s(\alpha)$ are the sum of sequence $\alpha$, $\beta$.

If we have arbitrary disconnected set of complete graph $G$, the incidence matrix $A$ is given as,

\begin{equation}\label{eq5}
A=\begin{pmatrix}
Q & \mathbf{0}_{Q\times W} & ... & \mathbf{0}_{Q\times V} \\ 
\\
\mathbf{0}_{W\times Q} & W & ... & \mathbf{0}_{W\times V} \\ 
\\
\mathbf{0}_{V\times Q} & \mathbf{0}_{V\times W}& ... & V
\end{pmatrix}
\end{equation}

Here $Q\times V$ means the matrix has the same row size as $Q$ and the same column size as $V$. For convenience, let us omit the size of the $\mathbf{0}$ and $\mathbf{1}$ matrix.
Then we can write down the Laplacina matrix of $G$ as,

\begin{equation}\label{eq6}
L_A=\begin{pmatrix}
Q & 0 & ... & 0 \\ 
\\
0 & W & ... & 0 \\ 
\\
0 & 0 & ... & V
\end{pmatrix}
=\begin{pmatrix}
Q & 0 \\ 
\\
0 & Z
\end{pmatrix}
\end{equation}

\begin{equation}\label{eq7}
B=\begin{pmatrix}
Q & 0 \\ 
\\
0 & 0
\end{pmatrix},\qquad
C=\begin{pmatrix}
0 & 0 \\ 
\\
0 & Z
\end{pmatrix}
\end{equation}

Then to calculate the determinant of $A$, let us first proof

\begin{equation}\label{eq8}
\det(A)=\det(B+C)=\det(Q)\det(Z)
\end{equation}

Let us define $n_A$=size($A$), which means $A$ is the $n_A$ times $n_A$ matrix.
From Eq. (\ref{eq4}), if we choose any all zero row or column then the result is going to be zero. Therefore for the available choice for $\alpha$ is $\{1,2,..,n_Q\}$. Vice versa for $B$ the available choice of $\beta$ is same as $\alpha$.

\begin{equation}\label{eq9}
\det(B+C)=(-1)^{s(\alpha)+s(\beta)}\det(B[\alpha|\beta])\det(C[\alpha|\beta])
\end{equation}

\begin{eqnarray}\label{eq10}
\det(A)&=&\det(B+C)=(-1)^{n_Q(n_Q+1)}\det(Q)\cdot\det(Z)\nonumber\\
&=&\det(Q)\cdot\det(Z)\qed
\end{eqnarray}

From Eq. (\ref{eq6}) we can rewrite matrix $Z$ as,

\begin{equation}\label{eq11}
Z=\begin{pmatrix}
W & 0 & ... & 0 \\ 
\\
0 & R & ... & 0 \\ 
\\
0 & 0 & ... & V
\end{pmatrix}
\end{equation}

And we can recursively using Eq. (\ref{eq10}),
\begin{eqnarray}\label{eq12}
\det(Z)&=&\det(W)\cdot...\cdot\det(V)
\end{eqnarray}

Therefore,

\begin{equation}\label{eq13}
\det(L_A)=\det(Q)\cdot\det(W)\cdot...\cdot\det(V)
\end{equation}

And if $Q$, $W$, ... $V$ is all complete graph's incidence matrix, by using Lemma II. 1 and Eq. (\ref{eq2})-(\ref{eq3}) one can be conclude that the maximum value for $\omega_i^2$ is the largest clique size in the graph. 

\begin{lemma}\label{lemma1}
The Laplacian for the complete graph $K_n$ on $n$ vertices has eigenvalue 0 with multiplicity 1 and eigenvalue $n$ with multiplicity $n-1$. 
\end{lemma}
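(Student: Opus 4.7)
The plan is to exploit the symmetry of $K_n$ to write the Laplacian in closed form and reduce the spectral problem to that of the rank-one all-ones matrix.

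First I would write $L_A$ explicitly. In $K_n$ every vertex has degree $n-1$, so $D = (n-1)I$, and the adjacency matrix is $A = J - I$, where $J$ denotes the $n \times n$ all-ones matrix. Using the paper's convention $L_A = D - A$ (Eq.~(\ref{eq3})), we get
\begin{equation}
L_A = (n-1)I - (J - I) = nI - J.
\end{equation}
Thus the spectrum of $L_A$ is determined by the spectrum of $J$.

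Next I would diagonalize $J$. Because $J = \mathbf{1}\mathbf{1}^{T}$ has rank $1$, its kernel has dimension $n-1$, so $\mu = 0$ is an eigenvalue of $J$ with multiplicity $n-1$. The remaining eigenvalue is obtained from $J\mathbf{1} = n\mathbf{1}$, giving $\mu = n$ with multiplicity $1$ and eigenvector $\mathbf{1}$.

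Finally I would transport this back to $L_A = nI - J$. Any eigenvector $v$ of $J$ with eigenvalue $\mu$ is automatically an eigenvector of $L_A$ with eigenvalue $n - \mu$. Hence $\mu = n$ contributes the eigenvalue $\lambda = 0$ with multiplicity $1$ (eigenvector $\mathbf{1}$), and $\mu = 0$ contributes $\lambda = n$ with multiplicity $n-1$, proving the lemma. There is no real obstacle: the argument is essentially immediate once one recognizes the rank-one structure of $J$. The only point requiring care is matching the paper's sign convention for $L_A$, which is what makes the final shift come out as $nI - J$ rather than $J - nI$ and thus produces a nonnegative spectrum consistent with the interpretation $\lambda_i = \omega_i^{2}$ in Eq.~(\ref{eq3}).
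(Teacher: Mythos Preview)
Your proof is correct and is essentially the same argument as the paper's: both rewrite $L_{K_n}=(n-1)I-A=nI-J$ and identify the eigenspace for $\lambda=n$ with the orthogonal complement of $\mathbf{1}$. The only cosmetic difference is that the paper cites an external lemma for the simplicity of the eigenvalue $0$ and then checks $L v = n v$ coordinatewise for $v\perp\mathbf{1}$, whereas you obtain both multiplicities in one stroke from the rank-one structure of $J$; your packaging is slightly cleaner and self-contained, but the underlying idea is identical.
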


\begin{proof}
By Lemma 3.1.1 on \cite{3}, eigenvalue 0 has multiplicity 1 with all-1s eigenvector $\mathbf{1}$. And if we take any non-zero vector, $v$ which is orthogonal to $\mathbf{1}$, 

\begin{equation}\label{eq14}
\mathbf{1}^{\rm T}\cdot v=\sum_i v_i=0
\end{equation}

\begin{equation}\label{eq15}
L_{K_n}=(n-1)\mathbf{I}_n-A,
\qquad A=\begin{cases}0&{\mbox{if}}\ i=j\\1&{\mbox{if}}\ i\neq j\ \end{cases}
\end{equation}

Note that this value is equal to,
\begin{equation}\label{eq16}
L_{K_n}v_i=(n-1)v_i-\sum_{i\neq j} v_j = nv_i-\sum_j v_j=nv_i
\end{equation}

Therefore any vector $v$ which is orthogonal to $\mathbf{1}$ is an eigenvector with an eigenvalue $n$. \qed
\end{proof}

\subsubsection{Clique-complex with two cliques}
We can express the Laplacian matrix as,

\begin{equation}\label{eq17}
L_A=\begin{pmatrix}
Q & -\mathbf{1}_{Q\times W} & \mathbf{0}_{Q\times V} \\ 
\\
-\mathbf{1}_{W\times Q} & W & -\mathbf{1}_{W\times V} \\ 
\\
\mathbf{0}_{W\times Q} & -\mathbf{1}_{V\times W} & V 
\end{pmatrix}
\end{equation}

Here $Q, W, V$ has a form of 

\begin{eqnarray}\label{eq18}
Q_{ij}&=&\begin{cases}n_{Q+W}&{\mbox{if}}\ i=j\\
-1&{\mbox{if}}\ i\neq j\ \end{cases}    \\
W_{ij}&=&\begin{cases}n_{Q+W+V}&{\mbox{if}}\ i=j\\
-1&{\mbox{if}}\ i\neq j\ \end{cases}    \\
V_{ij}&=&\begin{cases}n_{W+V}&{\mbox{if}}\ i=j\\
-1&{\mbox{if}}\ i\neq j\ \end{cases}
\end{eqnarray}

For convenience, let us omit the size of the $\mathbf{0}$ and $\mathbf{1}$ matrix. Then we can calculate the determinant of $A$ as,

\begin{equation}\label{eq21}
A=\begin{pmatrix}
Q & -1 & 0 \\ 
\\
-1 & W & -1 \\ 
\\
0 & -1 & V 
\end{pmatrix}
\end{equation}

\begin{equation}\label{eq22}
Av_i=\lambda_i v_i
\end{equation}

when we define eigenvector $v_1$ as

\begin{eqnarray}\label{eq23}
v_1^{\rm T}&&=\{\underbrace{0,0,...0}_{n_Q},|\underbrace{-1,0,..,0,1}_{n_W},|\underbrace{0,0,...,0}_{n_V}\}\nonumber\\
&&=\{q'|w'|v'\}    
\end{eqnarray}

and let us define the matrix $P,R,Z$ as,

\begin{equation}\label{eq24}
P=\begin{pmatrix}
Q & -1 & 0 \\ 
\\
-1 & W & 0 \\ 
\\
0 & 0 & 0 
\end{pmatrix}
\end{equation}

\begin{equation}\label{eq25}
R=\begin{pmatrix}
0 & 0 & 0 \\ 
\\
0 & W & -1 \\ 
\\
0 & -1 & V 
\end{pmatrix}
\end{equation}

\begin{equation}\label{eq26}
Z=\begin{pmatrix}
0 & 0 & 0 \\ 
\\
0 & W & 0 \\ 
\\
0 & 0 & 0 
\end{pmatrix}
\end{equation}

Then we can express $A=P+R-Z$

\begin{equation}\label{eq27}
Av_1=(P+R-Z)\cdot v_1=
\begin{pmatrix}
    -1_{Q\times W}\cdot w'\\
    W\cdot w'\\
    0
\end{pmatrix}
=\begin{pmatrix}
    0\\
    W\cdot w'\\
    0    
\end{pmatrix}
\end{equation}

\begin{eqnarray}\label{eq28}
W\cdot w'&=&    
\begin{pmatrix}
    n_{Q+W+V}-1&-1&...&-1\\
    -1&n_{Q+W+V}-1&...&-1\\
    -1& -1 &...&-1\\
    -1& -1 &...&n_W-1    
\end{pmatrix}\cdot 
\begin{pmatrix}
    -1\\
    0\\
    ...\\
    0\\
    1    
\end{pmatrix}\nonumber\\
&=&
\begin{pmatrix}
    -n_{Q+W+V}\\
    0\\
    ...\\
    0\\
    n_{Q+W+V}    
\end{pmatrix}=n_{Q+W+V}\cdot w
\end{eqnarray}

\begin{equation}\label{eq29}
Av_1=n_{Q+W+V}v_1=\lambda_1 v_1=\omega_1^2 v_1
\end{equation}

when we define eigenvector $v_2$ as

\begin{eqnarray}\label{eq30}
v_2^{\rm T}&&=\{\underbrace{0,0,...0}_{n_Q},|\underbrace{0,..,0}_{n_W},|\underbrace{-1,0,...,1}_{n_V}\}\nonumber\\
&&=\{q''|w''|v''\}    
\end{eqnarray}

\begin{equation}\label{eq31}
Av_2=(P+R-Z)\cdot v_2=
\begin{pmatrix}
    0\\
    -1_{W\times V}\cdot v''\\
    V\cdot v''
\end{pmatrix}
=\begin{pmatrix}
    0\\
    0\\
    V\cdot v''    
\end{pmatrix}
\end{equation}

Therefore,
\begin{eqnarray}\label{eq32}
V\cdot v''&=&    
\begin{pmatrix}
    n_{W+V}-1&-1&...&-1\\
    -1&n_{W+V}-1&...&-1\\
    -1& -1 &...&-1\\
    -1& -1 &...&n_{W+V}-1    
\end{pmatrix}\cdot 
\begin{pmatrix}
    -1\\
    0\\
    ...\\
    0\\
    1    
\end{pmatrix}\nonumber\\
&=&
\begin{pmatrix}
    -n_{W+V}\\
    0\\
    ...\\
    0\\
    n_{W+V}    
\end{pmatrix}=n_{W+V}\cdot v''
\end{eqnarray}

\begin{equation}\label{eq33}
Av_2=n_{W+V}v_2=\lambda_2 v_2=\omega_2^2 v_2
\end{equation}

Therefore the maximum frequency $\omega_1^2$ is the same as $n_A$, and the $\omega_2^2$ is the size of maximum clique of $G$.
\qed

\section{Conclusion and Outlook}

In this research, we proof the normal frequency analysis method to find the maximum clique of the particular type of graph. (disconnected set of complete graph, and clique-complex with two cliques).
For the next step, we can investigate,

\begin{itemize}
    \item[(1)] Find normal frequency analysis method for arbitrary clique complexes with multiple cliques.
    \item[(2)] Find a way to map arbitrary graphs to clique complex by updating the filtering process.
\end{itemize}

And if there is no particular mapping between arbitrary graphs to clique complex, further investigation to develop a new method is necessary.

\section{Appendix}

\subsection{Simulation Code}
The sample mathematical code for the Algorithm is introduced in Section 2.
\begin{lcverbatim}
A = {
   {0, 1, 1, 0, 0, 0, 0},
   {1, 0, 1, 1, 1, 1, 1},
   {1, 1, 0, 1, 1, 1, 1},
   {0, 1, 1, 0, 1, 1, 1},
   {0, 1, 1, 1, 0, 1, 1},
   {0, 1, 1, 1, 1, 0, 1},
   {0, 1, 1, 1, 1, 1, 0}
   };
B = Solve[
   Det[DiagonalMatrix[(k/2) 
   Total[A[[Range[n]]]]-m*w^2]-(k/2)*A] ==
     0, w];
DeleteDuplicates[B]
\end{lcverbatim}

% \begin{acknowledgments}
% We wish to acknowledge the support of the author community in using
% REV\TeX{}, offering suggestions and encouragement, testing new versions,
% \dots.
% \end{acknowledgments}

\end{document}